\newtheorem{theorem}{Theorem}
\newtheorem{proposition}{Proposition}
\newtheorem{corollary}{Corollary}
\newcommand{\ket}[1]{\left|{#1}\right\rangle}
\newcommand{\bra}[1]{\left\langle{#1}\right|}
\newcommand{\braket}[2]{\langle{#1}|{#2}\rangle}
\newcommand{\brakets}[1]{\langle{#1}\rangle}
\newcommand{\braketd}[3]{\langle{#1}|{#2}|{#3}\rangle}
\newcommand{\ketbra}[2]{\left|{#1}\rangle\!\langle{#2}\right|}
\newcommand{\ketp}[1]{|{#1}\rangle}
\newcommand{\brap}[1]{\langle{#1}|}
\newcommand{\norm}[1]{\vert{#1}\vert}
\begin{document}
\title{\Huge Quantum Uncertainty Equalities and Inequalities for Unitary Operators}
\author[1]{Ao-Xiang Liu}
\author[1]{Ma-Cheng Yang}
\author[1,2]{Cong-Feng Qiao\thanks{\href{mailto:qiaocf@ucas.ac.cn}{qiaocf@ucas.ac.cn}}}
\affil[1]{School of Physical Sciences, University of Chinese Academy of Sciences, Beijing 100049, China}
\affil[2]{Key Laboratory of Vacuum Physics, University of Chinese Academy of Sciences, Beijing 100049, China}
\renewcommand*{\Affilfont}{\small\it} 
\renewcommand*{\Authfont}{\normalsize} 
\renewcommand\Authands{ and } 
\date{Dated: October 25, 2024}
\maketitle
\vspace{-2em}
\begin{abstract}
We explore the uncertainty relation for unitary operators in a new way and find two uncertainty equalities for unitary operators, which are minimized by any pure states. Additionally, we derive two sets of uncertainty inequalities that unveil hierarchical structures within the realm of unitary operator uncertainty. Furthermore, we examine and compare our method for unitary uncertainty relations to other prevailing formulations. We provide explicit examples for better understanding and clarity. Results show that the hierarchical unitary uncertainty relations establish strong bounds. Moreover, we investigate the higher-dimensional limit of the unitary uncertainty equalities.
\end{abstract}
\section{Introduction}
The Heisenberg uncertainty principle \cite{HW27U} lies at the foundation of quantum theory and marks the departure of quantum physics from the classical realm. It was originally formulated for position and momentum observables \cite{HW27U} and sets limits on the information obtainable in simultaneous measurement of conjugate observables and on the preparation of the quantum states giving definite values for incompatible physical realities. Then Robertson \cite{RH29T} extended the uncertainty relation for position and momentum to two arbitrary observables $A$ and $B$ as
\begin{equation}
\Delta A\Delta B\geqslant\frac{1}{2}\norm{\braketd{\psi}{[A,B]}{\psi}},
\end{equation}
where the uncertainties of observables are measured in terms of the standard deviation $\Delta A(B)$ and $\braketd{\psi}{[A,B]}{\psi}$ denotes the average of the commutator of $A$ and $B$ in the state $\ket{\psi}$. Schrödinger strengthened this uncertainty relation by restoring an addtional positive term on the right-hand side of the inequality \cite{SE30S}:
\begin{equation}
\Delta A^2\Delta B^2\geqslant\frac{1}{4}\norm{\brakets{[A,B]}}^2+\frac{1}{4}\norm{\brakets{\lbrace A,B\rbrace}-2\brakets{A}\brakets{B}}^2,
\end{equation}
where $\lbrace A,B\rbrace$ is the anticommutator of $A$ and $B$ with $\brakets{\cdot}$ denoting the expectation value.

However, the above two uncertainty relations become trivial when $\ket{\psi}$ is the eigenstate of $A$ or $B$. To overcome the triviality problem, two stronger uncertainty relations based on the sum of variances of $A$ and $B$ are proposed by Maccone and Pati \cite{ML14S}, though the triviality problem has not been completely removed in their try.
Other variance-based uncertainty relations can be found in Refs. \cite{BP14C,SQ16S,SQ17A,YM23C}. Notably, the uncertainty relations mentioned above are all formulated in inequalities. In contrast, Yao {\it et al.} constructed and formulated two quantum uncertainty \emph{equalities} \cite{YY15I}, which are suitable for any two incompatible observables and imply the uncertainty relations obtained by Maccone and Pati \cite{ML14S}.
Information entropies can also be used to quantify the uncertainty principle \cite{EH57R,HI57A}. The entropic uncertainty relations \cite{BW75I,BB75U,DD83U,MH88G,SJ93E,WS10E,PJ12U,PJ15S,RA16E,PJ17E} have become a fundamental component in the security analysis of nearly all quantum cryptographic protocols, including quantum key distribution and two-party quantum cryptography. Another kind of novel and universal uncertainty relations \cite{PM11M,FS13U,PZ13M,NV16U,LJ19T,LA23A} is based on the majorization theory \cite{MW11I}, which reveals the structure nature of the quantum uncertainty and can reproduce the entropic uncertainty relations by Schur-concave functions. In all, the uncertainty nature of quanta plays a key role in quantum theory and quantum information.

The above mentioned uncertainty relations focus on the operators corresponding to quantum observables. Recently, uncertainty relation for reversible transformation of a quantum system (represented by unitary operators, while not necessarily to be Hermitian), i.e., unitary uncertainty relation, attracts much interest. The formulation of the uncertainty relation for unitary operators, based on variances, closely resembles the approach employed for incompatible observables. In this context, the uncertainty associated with a unitary operator $U$ is quantitatively defined as \cite{MS08U}
\begin{equation}
\label{Eq-var}
\Delta U^2\coloneqq\braketd{\psi}{U^\dagger U}{\psi}-\braketd{\psi}{U^\dagger}{\psi}\braketd{\psi}{U}{\psi}=1-\norm{\braketd{\psi}{U}{\psi}}^2\ .
\end{equation}
The variance-based unitary uncertainty relation is physically meaningful to explore: (i) It may characterize how much a quantum state can be simultaneously localized in two mutually unbiased bases that are related by a discrete Fourier transform \cite{MS08U}. (ii) It constrains our ability to distinguish two distinct unitary evolutions from the original state \cite{BS16U,FG04A,SE05K,AY87P,PA91R,SD17G}. (iii) It is impossible to prepare quantum states that demonstrate maximum interference visibility for two incompatible unitary operators \cite{SE00G}.

Massar and Spindel have demonstrated a unitary uncertainty relation (MSUUR) for any two finite dimensional unitary operators that satisfy the commutation relation $UV=e^{i\phi}VU$. That is
\begin{equation}
\label{Eq-MS}
(1+2K)\Delta U^2\Delta V^2+K^2(\Delta U^2+\Delta V^2)\geqslant K^2
\end{equation}
with $K=\tan\frac{\phi}{2}$, which applies as well to the discrete Fourier transform \cite{MS08U}. Recently, Bagchi and Pati \cite{BS16U} have derived three variance-based unitary uncertainty relations (BPUURs) applicable to general pairs of unitary operators. These relations have been investigated by \cite{XL17E} in the context of a photonic system. Among the three BPUURs, two are particularly stringent, and can be stated as follows:
\begin{equation}
\label{Eq-BPUUR1}
\begin{split}
\normalfont{\text{BPUUR1}}:\quad
\Delta U^2+\Delta V^2\geqslant
&1+\norm{\braketd{\psi}{U^\dagger V}{\psi}}^2\\
&-\braketd{\psi}{U^\dagger V}{\psi}\brakets{U}\brakets{V^\dagger}-\brakets{V}\brakets{U^\dagger}\braketd{\psi}{V^\dagger U}{\psi},
\end{split}
\end{equation}
where the Cauchy-Schwarz inequality is employed in the derivation.
\begin{equation}
\label{Eq-BPUUR2}
\normalfont{\text{BPUUR2}}:\quad\Delta U^2 +\Delta V^2\geqslant
\norm{\braketd{\psi}{U^\dagger\pm iV^\dagger}{\psi^{\perp}}}^2
\mp 2\,\mathrm{Im}[\mathrm{Cov}(U,V)].
\end{equation}
Here $\mathrm{Im}[\mathrm{Cov}(U,V)]$ represents the imaginary part of $\mathrm{Cov}(U,V)\coloneqq\braketd{\psi}{U^\dagger V}{\psi}-\brakets{U^\dagger}\brakets{V}$ and $|\psi^\perp\rangle$ symbolizes the arbitrary state that is orthogonal with $\ket{\psi}$. More recently,
Bong {\it et al.} \cite{BK18S} have presented a unitary uncertainty relation (BUUR) that applies to any number of unitary operators. In the case of two unitary operators, a well-known BUUR is formulated as follows \cite{BK18S}:
\begin{equation}
\label{Eq-BUUR}
\Delta U^2\Delta V^2\geqslant\norm{\mathrm{Cov}(U,V)}^2.
\end{equation}
A few other results of unitary uncertainty relations can be found in Refs. \cite{MM12T,MM13T,YB19S, HX22I,QD21E,SJ21E}.

In this paper, we introduce two novel variance-based \emph{unitary uncertainty equalities} (UUEs) in the form of sums and products. These UUEs are proven to hold for \emph{any} pure states, thereby establishing a comprehensive understanding of unitary uncertainty. By relaxing the conditions imposed in obtaining the UUEs, we find two hierarchical frameworks of unitary uncertainty relations. Remarkably, our unitary uncertainty relations encompass the BPUUR2 as a specific case, and the certain unitary uncertainty relations found in this study surpass the MSUUR, BPUURs, and BUUR in tightness of the lower bound. Moreover, we show that, with the dimensionality ($d$) of the system approaching infinity, the unitary uncertainty equalities and inequalities for the discrete unitary operators related by the discrete Fourier transform reduce to the uncertainty equalities \cite{YY15I} and inequalities \cite{ML14S} of Hermitian operators, respectively.

This paper is structured as follows: Section \ref{Sec-NUE} presents the derivation of uncertainty equalities for non-Hermitian operators. In Section \ref{Sec-UUE}, we introduce the uncertainty equalities and two specific sets of uncertainty inequalities applicable to unitary operators. A comparative analysis of the lower bounds of different unitary uncertainty relations with explicit examples will be presented. We discuss the uncertainty equalities and inequalities for mixed states in Sec. \ref{Sec-UUEMixed}. The investigation of the limits of unitary uncertainty equalities in higher dimensions is explored in Section \ref{Sec-HDL}. Finally, Section \ref{Sec-Con} provides some concluding remarks.
\section{Uncertainty Equalities}
\label{Sec-NUE}
In this section, we extend those two \emph{uncertainty equalities} for Hermitian operators in Ref.\cite{YY15I} to general quantum operators (Hermitian or non-Hermitian).
\begin{proposition}
\label{Thm-UE1}
Given any pair of general operators $A$ and $B$, the following uncertainty equality holds
\begin{equation}
\label{Eq-UE1}
\Delta A^2 +\Delta B^2=
\sum_{k=1}^{d-1}\norm{\braketd{\psi}{A^\dagger\pm iB^\dagger}{\psi_k^{\perp}}}^2
\mp2\,\mathrm{Im}[\mathrm{Cov}(A,B)],
\end{equation}
where the variance of general operator $O$ is defined as $\Delta O^2\coloneqq\braketd{\psi}{O^\dagger O}{\psi}-\norm{\braketd{\psi}{O}{\psi}}^2$, the covariance of $A$, $B$ is defined by $\mathrm{Cov}(A,B)=\braketd{\psi}{A^\dagger B}{\psi}-\braketd{\psi}{A^\dagger}{\psi}\braketd{\psi}{B}{\psi}$ and $\lbrace\ket{\psi},|\psi^{\perp}_k\rangle|_{k=1,...,d-1} \rbrace$ constitutes an orthonormal complete basis in the $d$-dimensional Hilbert space.
\end{proposition}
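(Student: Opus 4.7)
The plan is to exploit the resolution of identity carried by the complete orthonormal basis $\lbrace\ket{\psi},|\psi^{\perp}_k\rangle\rbrace_{k=1}^{d-1}$, and to recognize \eqref{Eq-BPUUR2} as the ``single-term'' specialization of a Parseval-type expansion. The guiding intuition is that BPUUR2 is obtained by discarding every direction but one in the orthogonal complement of $\ket{\psi}$, so summing over all $d-1$ such directions should saturate the bound and convert the inequality into an equality.

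First, I would establish the auxiliary identity
\begin{equation*}
\Delta O^2=\sum_{k=1}^{d-1}\norm{\braketd{\psi}{O^\dagger}{\psi_k^{\perp}}}^2,
\end{equation*}
valid for any linear operator $O$ on the $d$-dimensional Hilbert space. This follows by inserting the completeness relation $\sum_k\ketbrad{\psi_k^\perp}=I-\ketbrad{\psi}$ between $O^\dagger$ and $O$ in $\braketd{\psi}{O^\dagger O}{\psi}$ and observing that the contribution from $\ketbrad{\psi}$ is exactly $\norm{\braketd{\psi}{O}{\psi}}^2$, which is subtracted off by the very definition of $\Delta O^2$.

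Second, I would specialize this lemma to the composite operator $O=A\mp iB$, whose adjoint $O^\dagger=A^\dagger\pm iB^\dagger$ reproduces precisely the matrix elements displayed on the right-hand side of \eqref{Eq-UE1}. It then remains to expand
\begin{equation*}
\Delta(A\mp iB)^2=\braketd{\psi}{(A^\dagger\pm iB^\dagger)(A\mp iB)}{\psi}-\norm{\braketd{\psi}{A\mp iB}{\psi}}^2,
\end{equation*}
and to sort the result into a ``diagonal'' piece $\Delta A^2+\Delta B^2$, coming from the $A^\dagger A$ and $B^\dagger B$ contributions minus $\norm{\brakets{A}}^2+\norm{\brakets{B}}^2$, together with an ``off-diagonal'' piece proportional to $\mathrm{Cov}(A,B)-\overline{\mathrm{Cov}(A,B)}=2i\,\mathrm{Im}[\mathrm{Cov}(A,B)]$. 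Solving the resulting relation for $\Delta A^2+\Delta B^2$ yields exactly \eqref{Eq-UE1}.

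The main delicate step is the sign bookkeeping in the off-diagonal part: because $A$ and $B$ are \emph{not} required to be Hermitian, the cross terms $\brakets{A^\dagger B}-\brakets{B^\dagger A}$ and the conjugate-product combination $\brakets{A}\brakets{B}^{*}-\brakets{A}^{*}\brakets{B}$ must be tracked simultaneously, and one has to verify that they combine into precisely $2i\,\mathrm{Im}[\mathrm{Cov}(A,B)]$ with the sign predicted by the $\pm/\mp$ convention. Once this verification is carried out, the proposition follows without any further analytic input, and the derivation makes it transparent that BPUUR2 is the one-term lower bound obtained by retaining only a single $k$ in the Parseval sum.
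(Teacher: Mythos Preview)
Your proposal is correct and follows essentially the same route as the paper: both arguments insert the projector $\Psi^\perp=\mathbbm{1}-\ketbrad{\psi}=\sum_{k=1}^{d-1}\ketbrad{\psi_k^\perp}$ between $(A^\dagger\pm iB^\dagger)$ and $(A\mp iB)$ acting on $\ket{\psi}$, and then expand the resulting expression to identify $\Delta A^2+\Delta B^2\pm 2\,\mathrm{Im}[\mathrm{Cov}(A,B)]$. The only cosmetic difference is that you first isolate the lemma $\Delta O^2=\sum_k\norm{\braketd{\psi}{O^\dagger}{\psi_k^\perp}}^2$ for a general $O$ before specializing to $O=A\mp iB$, whereas the paper performs the same computation in one step by introducing the state $\ket{f^\mp}=(A\mp iB)\ket{\psi}$ and evaluating $\braketd{f^\mp}{\Psi^\perp}{f^\mp}$.
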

\begin{proof}
Given an operator $\Psi^\perp=\mathbbm{1}-\ketbra{\psi}{\psi}$ and a state $\ket{f^\pm}=(A\pm iB)\ket{\psi}$, we have
\begin{equation}
\label{Prf-UE1}
\begin{split}
\braketd{f^\mp}{\Psi^\perp}{f^\mp}
&=\bra{\psi}(A^\dagger\pm iB^\dagger)(\mathbbm{1}-\ketbra{\psi}{\psi})(A\mp iB)\ket{\psi}\\
&=\braket{f^\mp}{f^\mp}-\braket{f^\mp}{\psi}\braket{\psi}{f^\mp}\\
&=\brakets{A^\dagger A+B^\dagger B\mp iA^\dagger B\pm iB^\dagger A}-(\brakets{A^\dagger}\pm i\brakets{B^\dagger})(\brakets{A}\mp i\brakets{B})\\
&=\Delta A^2 +\Delta B^2\pm 2\,\mathrm{Im}[\mathrm{Cov}(A,B)]\ .
\end{split}
\end{equation}
Notice that $\lbrace\ket{\psi},|\psi^{\perp}_k\rangle|_{k=1,...,d-1} \rbrace$  constitutes an orthonormal complete basis, the operator $\Psi^\perp$ can be arbitrarily decomposed as $\Psi^\perp=\sum_{k=1}^{d-1}\ketbra{\psi_k^\perp}{\psi_k^\perp}$. The uncertainty equality in Eq. (\ref{Eq-UE1}) follows by substituting the $\Psi^\perp$ in Eq. (\ref{Prf-UE1}) with the decomposition.
\end{proof}
\begin{proposition}
\label{Thm-UE2}
Given any pair of general operators $A$ and $B$, the following uncertainty equality holds (considering only nontrivial case, i.e., $\Delta A\Delta B>0$)
\begin{equation}
\label{Eq-UE2}
\Delta A\Delta B=
\sum_{k=1}^{d-1}\frac{\norm{\braketd{\psi}{A^\dagger\Delta B\pm iB^\dagger\Delta A}{\psi_k^{\perp}}}^2}{2\Delta A\Delta B}
\mp\,\mathrm{Im}[\mathrm{Cov}(A,B)]\ .
\end{equation}
\end{proposition}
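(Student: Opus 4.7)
The plan is to reduce Proposition \ref{Thm-UE2} to Proposition \ref{Thm-UE1} by homogeneously rescaling the two operators. Since $\Delta O^{2}$ is manifestly real and non-negative (a squared Cauchy--Schwarz gap), the hypothesis $\Delta A\,\Delta B>0$ guarantees that the rescaled operators $\tilde{A}:=A/\Delta A$ and $\tilde{B}:=B/\Delta B$ are well defined. A one-line check then gives $\Delta\tilde{A}^{2}=\Delta\tilde{B}^{2}=1$ and $\mathrm{Cov}(\tilde{A},\tilde{B})=\mathrm{Cov}(A,B)/(\Delta A\,\Delta B)$, since both the variance and the covariance scale homogeneously under multiplication by a real scalar.

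Applying Proposition \ref{Thm-UE1} to the pair $(\tilde{A},\tilde{B})$ immediately yields
\[
2=\sum_{k=1}^{d-1}\left|\langle\psi|\tilde{A}^{\dagger}\pm i\tilde{B}^{\dagger}|\psi_{k}^{\perp}\rangle\right|^{2}\mp\frac{2\,\mathrm{Im}[\mathrm{Cov}(A,B)]}{\Delta A\,\Delta B}.
\]
Multiplying through by $\Delta A\,\Delta B/2$ and absorbing the real scalars $\Delta A$ and $\Delta B$ into the bra--ket on the right converts each summand into $\left|\langle\psi|A^{\dagger}\Delta B\pm iB^{\dagger}\Delta A|\psi_{k}^{\perp}\rangle\right|^{2}/(2\Delta A\,\Delta B)$, reproducing \eqref{Eq-UE2} exactly.

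As a direct alternative, one can mimic the proof of Proposition \ref{Thm-UE1} verbatim but start from the weighted auxiliary vectors $|f^{\pm}\rangle=(\Delta B\cdot A\pm i\Delta A\cdot B)|\psi\rangle$. Expanding $\langle f^{\mp}|\Psi^{\perp}|f^{\mp}\rangle$ mirrors the calculation in \eqref{Prf-UE1} and delivers $2\Delta A^{2}\Delta B^{2}\pm 2\Delta A\,\Delta B\,\mathrm{Im}[\mathrm{Cov}(A,B)]$; inserting the decomposition $\Psi^{\perp}=\sum_{k}|\psi_{k}^{\perp}\rangle\langle\psi_{k}^{\perp}|$ and dividing by $2\Delta A\,\Delta B$ recovers \eqref{Eq-UE2}.

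There is no substantial obstacle here: the argument is scalar bookkeeping once Proposition \ref{Thm-UE1} is available. The only point needing a brief justification is the reality of $\Delta A$ and $\Delta B$, which is what allows them to be moved freely in and out of expectation values and Hermitian conjugates without disturbing any phases; the nontriviality assumption $\Delta A\,\Delta B>0$ stated in the proposition is used precisely to license the division that rescales the operators.
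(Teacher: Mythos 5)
Your proposal is correct and matches the paper's own (very brief) argument, which likewise rescales by the standard deviations — defining $\ket{g^\pm}=(A/\Delta A\pm iB/\Delta B)\ket{\psi}$ — and repeats the Proposition \ref{Thm-UE1} computation; your two routes (applying Proposition \ref{Thm-UE1} to $\tilde A,\tilde B$ as a black box, or redoing the expansion with weighted vectors) are just two phrasings of that same idea. The scalar bookkeeping checks out, and your remark that $\Delta A\Delta B>0$ is exactly what licenses the division is the right observation.
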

\noindent
To prove this uncertainty equality, one can start by defining state $\ket{g^\pm}=(\frac{A}{\Delta A}\pm i\frac{B}{\Delta B})$ and follow the procedure in the proof of Proposition \ref{Thm-UE1}. Note, the Prop. \ref{Thm-UE1} and Prop. \ref{Thm-UE2} reduce to the uncertainty equalities for Hermitian operators obtained by Yao {\it et al.} \cite{YY15I}.
\section{Unitary Uncertainty Equalities and Inequalities}
\label{Sec-UUE}
In this section, we consider the uncertainty equalities and relations (inequalities) for two arbitrary unitary operators.
\subsection{Unitary uncertainty equalities}

One can use the Prop. \ref{Thm-UE1} and Prop. \ref{Thm-UE2} to obtain the sum-form unitary uncertainty equality (UUES) and product-form unitary uncertainty equality (UUEP) for two arbitrary unitary operators, respectively, by substituting $A$ and $B$ with unitary operators $U$ and $V$.
\begin{theorem}
Given any two unitary operators $U$ and $V$, the following unitary uncertainty equalities exist
\begin{equation}
\label{Eq-UUES}
\hspace{-1em}\normalfont{\text{UUES}}:\quad
\Delta U^2 +\Delta V^2=
\sum_{k=1}^{d-1}\norm{\braketd{\psi}{U^\dagger\pm iV^\dagger}{\psi_k^{\perp}}}^2
\mp 2\,\mathrm{Im}[\mathrm{Cov}(U,V)],
\end{equation}
\begin{equation}
\label{Eq-UUEP}
\normalfont{\text{UUEP}}:\quad
\Delta U\Delta V=
\sum_{k=1}^{d-1}\frac{\norm{\braketd{\psi}{U^\dagger\Delta V\pm iV^\dagger\Delta U}{\psi_k^{\perp}}}^2}{2\Delta U\Delta V}
\mp\,\mathrm{Im}[\mathrm{Cov}(U,V)].
\end{equation}
where $\Delta U^2$, $\Delta V^2$ are defined in Eq. (\ref{Eq-var}), $\mathrm{Cov}(U,V)=\braketd{\psi}{U^\dagger V}{\psi}-\braketd{\psi}{U^\dagger}{\psi}\braketd{\psi}{V}{\psi}$ and $\lbrace\ket{\psi},|\psi^{\perp}_k\rangle|_{k=1,...,d-1} \rbrace$  constitutes an orthonormal complete basis in the $d$-dimensional Hilbert space.
\end{theorem}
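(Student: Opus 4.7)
The plan is to derive both equalities as direct specializations of Propositions \ref{Thm-UE1} and \ref{Thm-UE2}, under the substitution $A\mapsto U$, $B\mapsto V$. Before pulling the trigger on that substitution, I would make one consistency check: the general variance used in those propositions, $\Delta O^2=\braketd{\psi}{O^\dagger O}{\psi}-\norm{\braketd{\psi}{O}{\psi}}^2$, must coincide with the unitary variance $\Delta U^2=1-\norm{\braketd{\psi}{U}{\psi}}^2$ introduced in Eq.~(\ref{Eq-var}). Since $U^\dagger U=\mathbbm{1}$ gives $\braketd{\psi}{U^\dagger U}{\psi}=1$ (and likewise for $V$), the two definitions agree. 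The covariance $\mathrm{Cov}(U,V)$ is defined identically in both places, so nothing else has to be reconciled.

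Once this is in hand, UUES in Eq.~(\ref{Eq-UUES}) is obtained by reading off Proposition \ref{Thm-UE1} with $(A,B)=(U,V)$, and UUEP in Eq.~(\ref{Eq-UUEP}) is obtained analogously from Proposition \ref{Thm-UE2}. For the product form I would flag explicitly that the nontriviality condition $\Delta U\Delta V>0$ is inherited from Proposition \ref{Thm-UE2}, since otherwise the denominator $2\Delta U\Delta V$ on the right-hand side is ill-defined; in the degenerate case the equality is vacuous anyway. I would also note that the orthonormal-complement basis $\{|\psi_k^\perp\rangle\}_{k=1}^{d-1}$ together with $\ket{\psi}$ spans the same $d$-dimensional Hilbert space in both the general and the unitary settings, so the resolution $\mathbbm{1}-\ketbrad{\psi}=\sum_k\ketbrad{\psi_k^\perp}$ carries over verbatim.

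There is no genuine obstacle here: the theorem is a direct corollary of the two propositions just established, as the surrounding text itself hints. If any step deserves to be singled out as the ``main'' one, it is only the bookkeeping check that the general-operator variance collapses to $1-\norm{\brakets{U}}^2$ on unitaries, which is what allows the abstract equalities of Section~\ref{Sec-NUE} to be repackaged in the physically meaningful unitary form of Eqs.~(\ref{Eq-UUES})--(\ref{Eq-UUEP}).
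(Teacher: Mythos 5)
Your proposal is correct and follows exactly the paper's own route: the theorem is obtained by specializing Propositions \ref{Thm-UE1} and \ref{Thm-UE2} to $A=U$, $B=V$, with the observation that $U^\dagger U=\mathbbm{1}$ makes the general variance collapse to $1-\norm{\brakets{U}}^2$. The extra remarks on the nontriviality condition $\Delta U\Delta V>0$ and the resolution of the identity are sensible bookkeeping that the paper leaves implicit.
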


In the following we provide some observations on the aforementioned unitary uncertainty equalities prior to proceeding further. First, it is worth noting that these equalities can be readily extended to infinite-dimensional systems. Secondly, it is important to highlight that the unitary uncertainty equalities, namely UUES and UUEP, are always satisfied by pure states. Moreover, by dropping off $1\leqslant n\leqslant d-2$ terms in the summations of the right-hand sides of Eqs. (\ref{Eq-UUES},\ref{Eq-UUEP}), one may readily obtain two hierarchical structures of unitary uncertainty relations (inequalities). For instance, in Eq. (\ref{Eq-UUES}), discarding $d-2$ perpendicular terms on the right-hand side of Eq. (\ref{Eq-UUES}), we reproduce BPUUR2 of Eq. (\ref{Eq-BPUUR2}), no need to resort to the Cauchy-Schwarz inequality. Similarly, Eq. (\ref{Eq-UUEP}) leads to a new unitary uncertainty relation
\begin{equation}
\Delta U\Delta V\geqslant
\frac{\norm{\braketd{\psi}{U^\dagger\Delta V\pm iV^\dagger\Delta U}{\psi^{\perp}}}^2}{2\Delta U\Delta V} \mp\mathrm{Im}[\mathrm{Cov}(U,V)]\ .
\end{equation}
It is noteworthy that the perpendicular terms on the right-hand sides of Eqs. (\ref{Eq-UUES},\ref{Eq-UUEP}) originate from the fact that an operator maps a state onto a linear combination of itself and other states perpendicular to it, see Fig. \ref{Fig-evolution} for illustration.
\begin{figure}[H]
\centering
\includegraphics[width=0.7\linewidth]{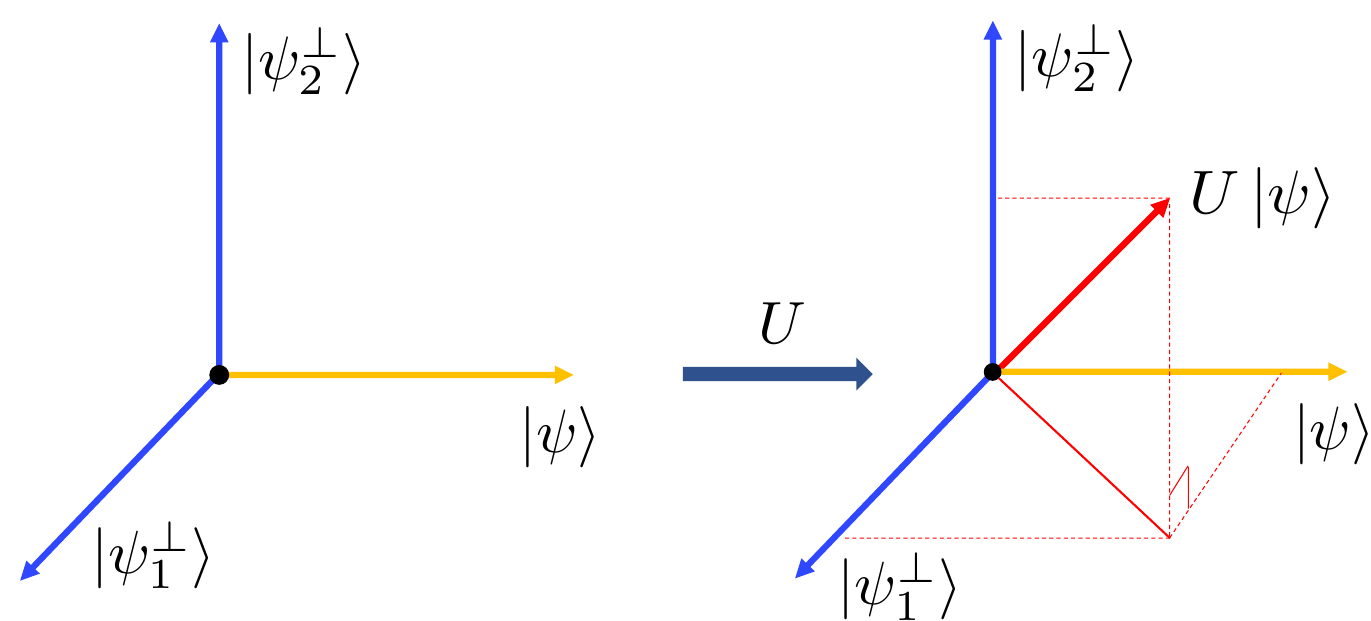}
\caption{In three-dimensional Hilbert space, the unitary operator $U$ maps a state $\ket{\psi}$ onto a linear combination of itself and other perpendicular states. }
\label{Fig-evolution}
\end{figure}
\subsection{The hierarchical structure of unitary uncertainty relations}
\label{Sec-HierarUUR}
As mentioned in preceding section, if we consider only $1\leqslant n < d-1$ elments in the set of $\lbrace|\psi^{\perp}_k\rangle|_{k=1,...,d-1} \rbrace$, we can reach a series of unitary uncertainty relations from UUES and UUEP. Taking UUES as an example, it is straightforward to find
\begin{equation}
\Delta U^2 +\Delta V^2 >
\sum_{k\in\mathcal{I}_n}\norm{\braketd{\psi}{U^\dagger\pm iV^\dagger}{\psi_k^{\perp}}}^2
\mp 2\,\mathrm{Im}[\mathrm{Cov}(U,V)]\ ,
\end{equation}
where $\mathcal{I}_n\subseteq\{1,\cdots,d-1\}$, and $|\mathcal{I}_n|_c=n$ denotes the cardinality of the set $\mathcal{I}_n$. If and only if $n=d-1$, the inequality turns to equality. We denote the set of all the possible $\mathcal{I}_n$ as
\begin{equation}
S_n=\big\lbrace \mathcal{I}_n\subseteq\{1,\cdots,d-1\}\, \big| \,|\mathcal{I}_n|_c=n\big\rbrace\ .
\end{equation}
Here, the cardinality of $S_n$, the $|S_n|_c$, is $\frac{(d-1)!}{n!(d-n-1)!}$.
Furthermore, by maximizing the term $\sum_{\mathcal{I}_n}\norm{\braketd{\psi}{U^\dagger\pm iV^\dagger}{\psi_k^{\perp}}}^2$ over all possible sets $\mathcal{I}_n$, there will come a stronger unitary certainty and uncertainty relation from UUES.
In the same vein, one can also derive a hierarchical framework of unitary uncertainty relations for UUEP. As a result, we have the  following findings:
\begin{corollary}
Given any two unitary operators $U$ and $V$, there exists the following hierarchical structure unitary certainty and uncertainty relations for every $1\leqslant n\leqslant d-1$
\begin{equation}
\label{Eq-UURS}
\hspace{-1.7em}\normalfont{\text{UURS}}n:\quad
\Delta U^2 +\Delta V^2\geqslant
\max\limits_{\mathcal{I}_n\in S_n}\Big[\sum_{k\in\mathcal{I}_n}\norm{\braketd{\psi}{U^\dagger\pm iV^\dagger}{\psi_k^{\perp}}}^2\Big]
\mp 2\,\mathrm{Im}[\mathrm{Cov}(U,V)]\ .
\end{equation}
\begin{equation}
\label{Eq-UURP}
\normalfont{\text{UURP}}n:\quad
\Delta U\Delta V\geqslant
\frac{1}{2}	\max\limits_{\mathcal{I}_n\in S_n}\Big[\sum_{k\in\mathcal{I}_n}\frac{\norm{\braketd{\psi}{U^\dagger\Delta V\pm iV^\dagger\Delta U}{\psi_k^{\perp}}}^2}{\Delta U\Delta V}\Big] \mp\mathrm{Im}[\mathrm{Cov}(U,V)]\ .
\end{equation}
In this context, the notation $\max\limits_{\mathcal{I}_n\in S_n}[\cdot]$ signifies the process of maximizing the corresponding term by considering all possible sets $\lbrace|\psi^{\perp}_k\rangle|_{k\in\mathcal{I}_n}\rbrace$.
\end{corollary}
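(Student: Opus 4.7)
The plan is to derive both hierarchical relations as direct consequences of the equalities UUES (Eq.~\ref{Eq-UUES}) and UUEP (Eq.~\ref{Eq-UUEP}) by exploiting the fact that every summand indexed by $k$ on their right-hand sides is manifestly nonnegative: each has the form $\norm{\braketd{\psi}{\cdots}{\psi_k^\perp}}^2$ (possibly scaled by the positive constant $2\Delta U\Delta V$), while the covariance term $\mathrm{Im}[\mathrm{Cov}(U,V)]$ depends only on $\ket{\psi}$ and is independent of the basis extension. This monotonicity alone, combined with the freedom to choose the orthonormal complement, will furnish the corollary.

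First I would start from UUES. Fix $\ket{\psi}$ and pick any orthonormal set $\{|\psi_k^\perp\rangle\}_{k\in\mathcal{I}_n}$ of $n$ vectors in the orthogonal complement of $\ket{\psi}$, then extend it arbitrarily to a full orthonormal basis $\{\ket{\psi},|\psi_1^\perp\rangle,\dots,|\psi_{d-1}^\perp\rangle\}$. Applying Eq.~(\ref{Eq-UUES}) and discarding the $d-1-n$ nonnegative terms outside $\mathcal{I}_n$ gives
\begin{equation*}
\Delta U^2+\Delta V^2 \;\geqslant\; \sum_{k\in\mathcal{I}_n}\norm{\braketd{\psi}{U^\dagger\pm iV^\dagger}{\psi_k^\perp}}^2 \mp 2\,\mathrm{Im}[\mathrm{Cov}(U,V)].
\end{equation*}
Since this holds for every admissible $n$-subset, I may take the supremum of the right-hand side over $\mathcal{I}_n\in S_n$, which is precisely UURSn. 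The second step applies the identical monotone-discarding argument to UUEP: under the nontriviality assumption $\Delta U\Delta V>0$ the summands $\norm{\braketd{\psi}{U^\dagger\Delta V\pm iV^\dagger\Delta U}{\psi_k^\perp}}^2/(2\Delta U\Delta V)$ are nonnegative, so dropping the terms outside $\mathcal{I}_n$ and then maximizing over $\mathcal{I}_n\in S_n$ produces UURPn. The extremal case $n=d-1$ reabsorbs all the discarded summands and recovers the two equalities of the preceding theorem, so the bounds are monotone in $n$: $\mathrm{UURS}(1)\leqslant \mathrm{UURS}(2)\leqslant\cdots\leqslant \mathrm{UURS}(d-1)$, with equality reached at the top level, and analogously for UURP.

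The proof presents no significant analytical obstacle; it reduces to a single monotonicity observation applied to a proven identity. The only point that warrants careful phrasing is the precise meaning of $\max_{\mathcal{I}_n\in S_n}$: one must state explicitly that the optimization ranges over all $n$-tuples of orthonormal vectors in $\ket{\psi}^\perp$ (equivalently, over a Grassmannian of $n$-dimensional subspaces of $\ket{\psi}^\perp$), and not merely over index subsets of a single fixed basis, for otherwise the bound would be basis-dependent and fail to be the strongest of its form. Evaluating this maximum explicitly for concrete pairs $U,V$ may of course be nontrivial, but that is a separate computational question, to be addressed by the examples that follow.
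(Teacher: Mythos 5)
Your proof is correct and follows essentially the same route as the paper: drop the manifestly nonnegative perpendicular terms from the equalities UUES/UUEP and then maximize the retained sum over all admissible $n$-subsets. The one point where you go slightly beyond the paper is the interpretation of the maximization --- the paper's count $|S_n|_c=\binom{d-1}{n}$ indicates it optimizes only over index subsets of a single fixed basis, whereas you propose the basis-independent optimization over all orthonormal $n$-tuples in the orthogonal complement of $\ket{\psi}$; both yield valid inequalities, with yours giving the tighter (and better-posed) bound.
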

\begin{figure}[H]
\centering
\includegraphics[width=1\linewidth]{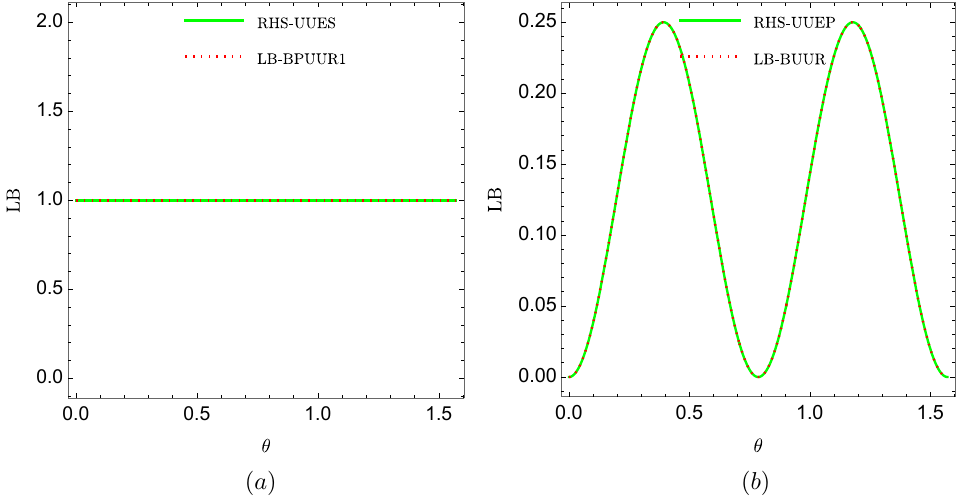}
\vspace{-2em}
\caption{Comparison of the right-hand sides of UUES and UUEP with the lower bounds of BPUUR1 in Eq. (\ref{Eq-BPUUR1}) and BUUR in Eq. (\ref{Eq-BUUR}), respectively, in a qubit system. (a) The dotted red curve represents the lower bound of BPUUR1 (LB-BPUUR1) and coincides with the solid green curve representing the right-hand side of UUES (RHS-UUES). (b) The dotted red curve represents the lower bound of BUUR (LB-BUUR) and coincides with the solid green curve representing the right-hand side of UUEP (RHS-UUEP).}
\label{Fig-UUEqubit}
\end{figure}

Investigating the effectiveness of new unitary uncertainty relations in the context of qubit systems is of significant interest. It is worth noting that for any given state $\ket{\psi}$ of a qubit system, the perpendicular state $|\psi^\perp\rangle$ is uniquely determined, up to an irrelevant overall phase factor. As a result, the Bagchi and Pati's  BPUUR2 in Eq. (\ref{Eq-BPUUR2}), along with UURS1 and UURP1 in Eqs. (\ref{Eq-UURS}, \ref{Eq-UURP}), becomes an equality when considering arbitrary qubit pure states, as shown in Fig.{\ref{Fig-UUEqubit}}. Therefore, the UURS, BPUUR2 and UURP encompass all the uncertainties arising from the sum and product of variances in the case of qubit systems.

In a nutshell, we have presented two hierarchical structures of unitary uncertainty relations for both the sum and product forms, utilizing the variance of the unitary operators. These uncertainty relations become stronger as they address a greater number of perpendicular states $\ket{\psi^\perp_k}$. In other words, the lower bound of UURS(P)$n$ is stronger than UURS(P)$m$ if $n\geqslant m$. It is worth to note that these unitary uncertainty relations are obtained without employing the Cauchy-Schwarz inequality.

\subsection{Examples}
\label{Sec-eg}
For illustration of our results in comparison with those in the literature, let's consider a generic pure state $\ket{\psi}=\cos\theta\ket{0}-\sin\theta\ket{d-1}$ with $\theta\in[0,\pi/2]$, where $\ket{0}= (1,0,\cdots,0)$ and $\ket{d-1}= (0,\cdots,0,1)$ represent two base states in a $d$-dimensional Hilbert space, and two unitary operators that are connected via discrete Fourier transform, for instance
\begin{equation}
\label{Eq-UOU}
U=\mathrm{diag}(1,\omega,\omega^2,...,\omega^{d-1}),
\end{equation}
\begin{equation}
\label{Eq-UOV}
V=
\begin{pmatrix}
0&1\\
I_{d-1}&0
\end{pmatrix}\ .
\end{equation}
Here, $I_{d-1}$ denotes the $(d-1)$-dimensional identity matrix and $\omega=e^{i2\pi/d}$. Obviously, $UV=\omega VU$. Next, we compare our lower bounds with that of Massar and Spindel's, Bagchi and Pati's, and Bong {\it et al.}'s  results for two-, three-, four-, five- and six-dimensional systems, see Fig. \ref{Fig-UUEqubit}, Fig. \ref{Fig-UURS} and Fig. \ref{Fig-UURP} for illustration.

For a qubit system, i.e., $d=2$,
\begin{equation}
\label{Eq-UVQubit}
U=\sigma_z=
\begin{pmatrix}
1&0\\
0&-1
\end{pmatrix},
V=\sigma_x=
\begin{pmatrix}
0&1\\
1&0
\end{pmatrix}.
\end{equation}
In this case, both the lower bounds of unitary uncertainty relations BPUUR1 in (\ref{Eq-BPUUR1}) and BUUR in (\ref{Eq-BUUR}) are maximized as shown by Fig. \ref{Fig-UUEqubit}. Moreover, the sum of the variance of $U$ and $V$ in Eq. (\ref{Eq-UVQubit}) equals to one, i.e.,
\begin{equation}
\Delta U^2+\Delta V^2=1\ ,
\end{equation}
which maximizes the $2$-dimensional MSUUR \cite{MS08U}
\begin{equation}
\label{Eq-URPauli}
\Delta\sigma_z^2+\Delta\sigma_x^2\geqslant 1\ .
\end{equation}
An interpretation of Eq. (\ref{Eq-URPauli}) is that it reveals an uncertainty relation for Mach-Zehnder interferometers, wherein the predictability of the particle's path and the visibility of the interference fringes are interconnected \cite{GD88S,JG95T,MS08U}. By considering the definition of interference visibility as $\mathcal{V}\coloneqq\norm{\braketd{\psi}{U}{\psi}}=\sqrt{1-\Delta U^2}$ \cite{SE00G}, we observe a strict complementarity between the interference visibilities of the unitary operators $U=\sigma_z$ and $V=\sigma_x$ in state $\ket{\psi}=\cos\theta\ket{0}-\sin\theta\ket{1}$:
\begin{equation}
\mathcal{V}^2_U+\mathcal{V}^2_V=1,
\end{equation}
which has operational significance and can be verified experimentally.

\begin{figure}[H]
\centering
\includegraphics[width=1\linewidth]{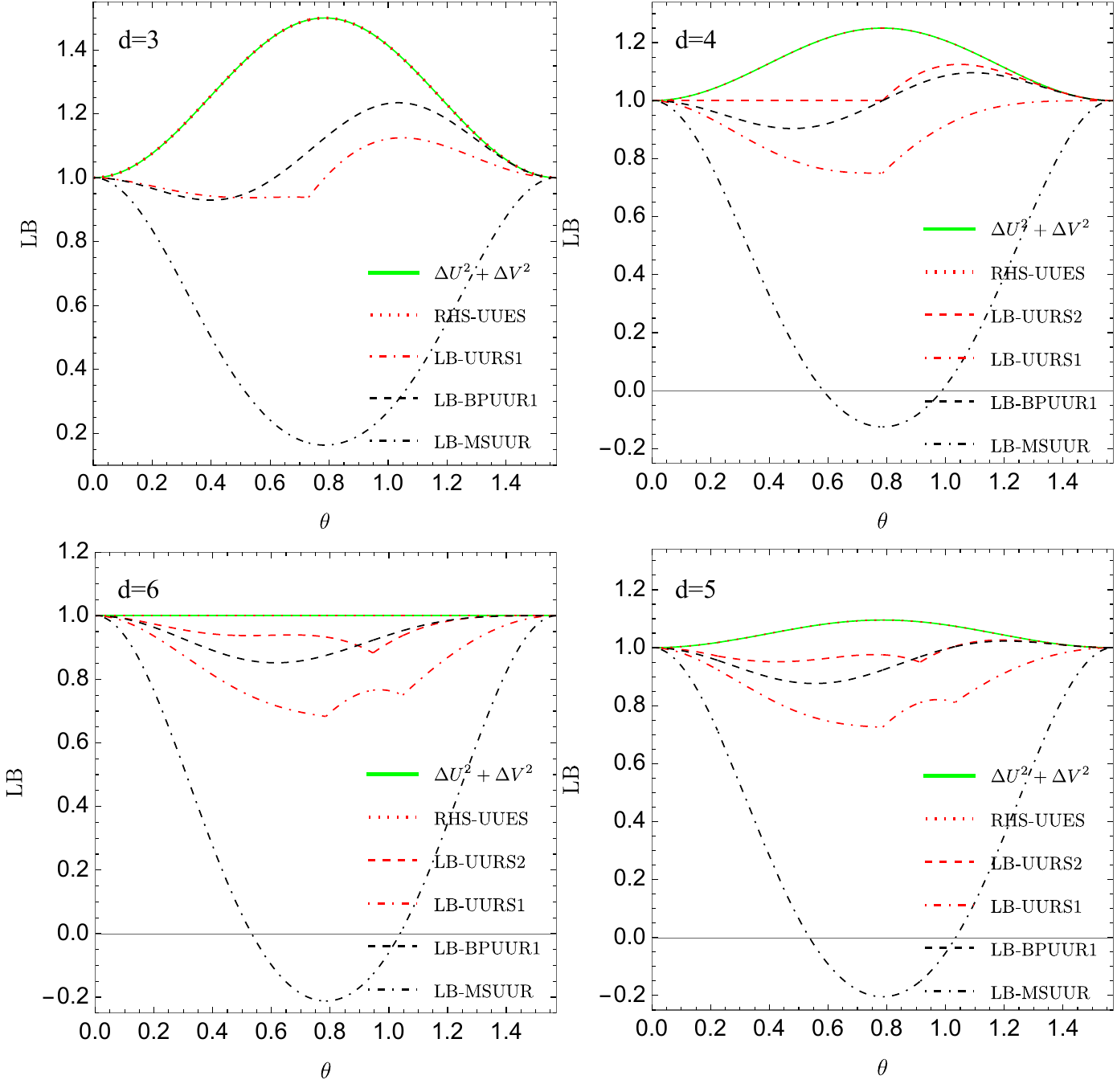}
\vspace{-2em}
\caption{Comparison of the sum of variances, $\Delta U^2+\Delta V^2$ (represented by the solid green curve), and the lower bounds for some aforementioned unitary uncertainty relations. The right-hand side of the UUES (RHS-UUES) is represented by the dotted red curve. The dot-dashed black curve represents the Massar and Spindel's lower bound (LB-MSUUR). The dashed black curve represents the lower bound of Bagchi and Pati's BPUUR1 (LB-BPUUR1). The dot-dashed red curve and dashed red curve represent the lower bound of UURS1 (LB-UURS1) and UURS2 (LB-UURS2), respectively.}
\label{Fig-UURS}
\end{figure}
For $d=3$,
\begin{equation}
U=\mathrm{diag}(1,e^{\frac{2\pi i}{3}},e^{\frac{4\pi i}{3}}),\
V=
\begin{pmatrix}
0&1\\
I_{2}&0
\end{pmatrix}.
\end{equation}
In this case, the $K$ in Eq. (\ref{Eq-MS}) turns to be $K=\tan\frac{\pi}{3}$.
\begin{figure}[H]
\centering
\includegraphics[width=1\linewidth]{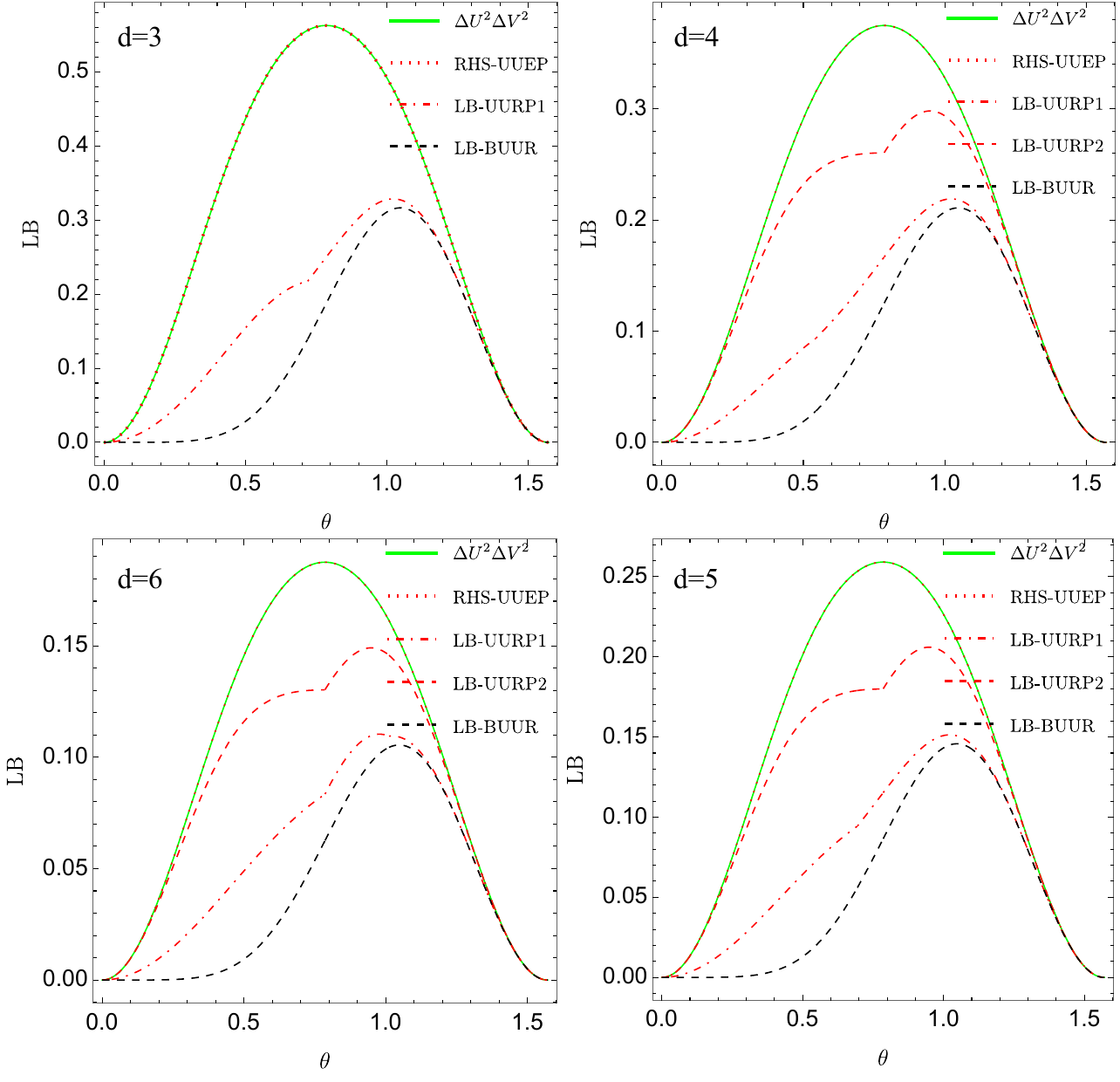}
\vspace{-2em}
\caption{Comparison of the product of variances, $\Delta U^2\Delta V^2$ (represented by the solid green curve), and the lower bounds for some aforementioned unitary uncertainty relations. The right-hand side of the UUEP (RHS-UUEP) is represented by the dotted red curve. The dashed black curve represents the lower bound of Bong {\it et al.}’s BUUR (LB-BUUR). The dot-dashed red curve and dashed red curve represent the lower bound of UURP1 (LB-UURP1) and UURP2 (LB-UURP2), respectively.}
\label{Fig-UURP}
\end{figure}

For $d=4$,
\begin{equation}
U=\mathrm{diag}(1,e^{\frac{\pi i}{2}},e^{\pi i},e^{\frac{\pi i}{2}}),\
V=
\begin{pmatrix}
0&1\\
I_{3}&0
\end{pmatrix},\
K=\tan\frac{\pi}{4}.
\end{equation}

For $d=5$,
\begin{equation}
U=\mathrm{diag}(1,e^{\frac{2\pi i}{5}},e^{\frac{4\pi i}{5}},e^{\frac{6\pi i}{5}},e^{\frac{8\pi i}{5}}),\
V=
\begin{pmatrix}
0&1\\
I_{4}&0
\end{pmatrix},\
K=\tan\frac{\pi}{5}.
\end{equation}

For $d=6$,
\begin{equation}
U=\mathrm{diag}(1,e^{\frac{\pi i}{3}},e^{\frac{2\pi i}{3}},e^{\pi i},e^{\frac{4\pi i}{3}},e^{\frac{5\pi i}{3}}),\
V=
\begin{pmatrix}
0&1\\
I_{5}&0
\end{pmatrix},\
K=\tan\frac{\pi}{6}.
\end{equation}

Fig. \ref{Fig-UURS} shows that the bound of UURS1 is stronger than the Massar and Spindel bound but weaker than Bagchi and Pati bound. It also shows that the bound of UURS2 performs better than the Massar and Spindel bound, and better than the Bagchi and Pati bound for certain regions. Moreover, the curve of the right-hand side of unitary uncertainty equality in Eq. (\ref{Eq-UUES}) always coincides with the curve of $\Delta U^2+\Delta V^2$. In Fig. \ref{Fig-UURP}, it shows that both the bounds of UURP1 and UURP2 are stronger than the bound of unitary uncertainty relation of Bong {\it et al.} Again, the curve of the square of the right-hand side of unitary uncertainty equality in Eq. (\ref{Eq-UUEP}) always coincides with the curve of $\Delta U^2\Delta V^2$. Note that there are only three nonzero terms in the summations of the right-hand sides of UUES and UUEP for both the five- and six- dimensional cases within our choice of the unitary operators and orthogonal basis.
\section{Unitary Uncertainty Equalities and Inequalities for Mixed States}
\label{Sec-UUEMixed}
Up to now, we have discussed unitary uncertainty equalities and inequalities within the case of pure states. In this section, we shall show that our results can be generalized to the case of mixed states utlizing the concept of the {\it purification} of density operator. 

Let $\rho$ be a density operator on the concerned Hilbert space $\mathcal{H}_S$, which represents a mixed state. Then one can always find a pure state vector of an extended Hilbert space, i.e., $\ket{\xi}_{SE}\in\mathcal{H}_S\otimes\mathcal{H}_E$, such that 
\begin{equation}
\rho=\mathrm{Tr}_E(\ketbra{\xi}{\xi}_{SE})
\end{equation}
where $\mathrm{Tr}_E$ stands for the partial trace over the ``environment" Hilbert space $\mathcal{H}_E$. The pure state vector $\ket{\xi}_{SE}$ is a purification of density operator $\rho$. Then one can apply the extended operators $U\otimes\mathbbm{1}_E$ to obtain the results about the density operator $\rho$, where $U$ is any unitary operator on Hilbert space $\mathcal{H}_S$ and $\mathbbm{1}_E$ is identity operator on the Hilbert space $\mathcal{H}_E$. Furthermore, the expectation of $U$ on $\rho$ satisfies the following equation \cite{MA10Q}
\begin{equation}
\mathrm{Tr}(U\otimes\mathbbm{1}_E\ketbra{\xi}{\xi}_{SE})=\mathrm{Tr}(\rho U).
\end{equation}
In accordance,  the variance and covariance can be obtained from the purification
\begin{equation}
\Delta_{S} U=\Delta_{SE} (U\otimes\mathbbm{1}_E), 
\end{equation}
\begin{equation}
\mathrm{Cov}_{S}(U,V)=\mathrm{Cov}_{SE}(U\otimes\mathbbm{1}_E,V\otimes\mathbbm{1}_E).
\end{equation}
Here, the subscripts $S$ and $SE$ indicate that the variance (covariance) is calculated based on the state $\rho$ and $\ket{\xi}_{SE}$, respectively.

Hence, consider two extended operators $U\otimes\mathbbm{1}_E$ and $V\otimes\mathbbm{1}_E$ and a purification $\ket{\xi}_{SE}$ of mixed state $\rho$, one may have the following unitary uncertainty equalities
\begin{equation}
\label{Eq-UUES_mixed}
\Delta U^2 +\Delta V^2=
\sum_{k=1}^{d_s d_e -1}\norm{\mathrm{Tr}\big((U^\dagger\pm iV^\dagger)M_k^\xi\big)}^2
\mp 2\,\mathrm{Im}[\mathrm{Cov}(U,V)],   
\end{equation}
\begin{equation}
\label{Eq-UUEP_mixed}
\Delta U\Delta V=
\sum_{k=1}^{d-1}\frac{|\mathrm{Tr}\big((U^\dagger\Delta V\pm iV^\dagger\Delta U)M_k^\xi\big)|^2}{2\Delta U\Delta V}
\mp\,\mathrm{Im}[\mathrm{Cov}(U,V)].
\end{equation}
Here, $d_s$ and $d_e$ denote the dimensions of Hilbert space $\mathcal{H}_S$ and $\mathcal{H}_E$, respectively. $M_k^\xi=\mathrm{Tr}_E (\ketp{\xi}\brap{\xi_k^\perp}_{SE})$, where $\lbrace\ket{\xi}_{SE},|\xi^{\perp}_k\rangle_{SE}|_{k=1,...,d_s d_e -1} \rbrace$  constitutes an orthonormal complete basis in the Hilbert space $\mathcal{H}_S\otimes\mathcal{H}_E$. Using the approach proposed in section \ref{Sec-HierarUUR}, one can obtain the hierarchical structures of unitary uncertainty inequalities for mixed state $\rho$ accordingly. It is worth mentioning that, in a recent work \cite{LM23U}, the concept of {\it amplitude operator} has been used to generalize the Aharonov-Vaidman identity \cite{AY90P} to more general density operators. Moreover, the equivalence between amplitude operators and purifications has been established as well \cite{LM23U}.

\section{High-dimensional Limit of UUEs}
\label{Sec-HDL}
In Section \ref{Sec-NUE}, we have derived two uncertainty equalities tailored for non-Hermitian operators. It should be mentioned that when confining the operators $A$ and $B$ to be Hermitian, one can replicate the uncertainty equalities initially given in Ref. \cite{YY15I}. On the other hand, reference \cite{MS08U} has demonstrated the existence of a correlation between uncertainty relations about unitary operators and their corresponding Hermitian counterparts. In higher-dimensional limit, i.e., $d\rightarrow\infty$, the Massar and Spindel unitary uncertainty relation reproduces the Heisenberg uncertainty relations for two Hermitian operators. Inspired by this discovery, Ref. \cite{BS16U} embarks on an exploration of high-dimensional limits of the uncertainty relations for two unitary operators related by the discrete Fourier transform. In this section, we explore the high-dimensional limit of our unitary uncertainty equalities for two unitary operators related by the discrete Fourier transform.

First of all, one can rewrite the two unitary operators $U$ and $V$ related by the discrete Fourier transform as \cite{MS08U}
\begin{equation}
U=e^{iu\sqrt{2\pi/d}},\ V=e^{iv\sqrt{2\pi/d}},
\end{equation}
where $u$ and $v$ are Hermitian operators. We consider only a \emph{subset} of quantum states for which one can approximate $U$ and $V$ by their series expansions \cite{MS08U}:
\begin{equation}
U\simeq \mathbbm{1}+i\sqrt{\frac{2\pi}{d}}u-\frac{\pi}{d}u^2,\ V\simeq
\mathbbm{1}+i\sqrt{\frac{2\pi}{d}}v-\frac{\pi}{d}v^2.
\end{equation}
Accordingly, it implies the variances of the left-hand side of unitary uncertainty equalities are
\begin{equation}
\Delta U^2\simeq\frac{2\pi}{d}(\brakets{u^2}-\brakets{u}^2)=
\frac{2\pi}{d}\Delta u^2,\
\Delta V^2\simeq\frac{2\pi}{d}(\brakets{v^2}-\brakets{v}^2)=
\frac{2\pi}{d}\Delta v^2.
\end{equation}
The relevant terms on the right-hand side of the UUES in Eq. (\ref{Eq-UUES}) are
\begin{equation}
\norm{\braketd{\psi_k^{\perp}}{U\mp iV}{\psi}}^2\simeq\norm{\braketd{\psi_k^{\perp}}{(\mathbbm{1}+i\sqrt{\frac{2\pi}{d}}u-\frac{\pi}{d}u^2)\mp i(\mathbbm{1}+i\sqrt{\frac{2\pi}{d}}v-\frac{\pi}{d}v^2)}{\psi}}^2.
\end{equation}
Keeping terms up to the second order, we have
\begin{equation}
\norm{\braketd{\psi_k^{\perp}}{U\mp iV}{\psi}}^2\simeq\frac{2\pi}{d}\norm{\braketd{\psi_k^{\perp}}{u\mp iv}{\psi}}^2.
\end{equation}
Similarly, we obtain the approximate form of $\mathrm{Cov}(U,V)$
\begin{equation}
\mathrm{Im}[\mathrm{Cov}(U,V)]\simeq\frac{\pi}{id}\brakets{[u,v]},
\end{equation}
and
\begin{equation}
\norm{\braketd{\psi_k^{\perp}}{\frac{U}{\Delta U}\mp i\frac{V}{\Delta V}}{\psi}}^2\simeq\frac{2\pi}{d}\norm{\braketd{\psi_k^{\perp}}{\frac{u}{\Delta u}\mp i\frac{v}{\Delta v}}{\psi}}^2,
\end{equation}

Combining the results above, we reproduce the uncertainty equalities for the Hermitian operators $u$ and $v$ as follows
\begin{equation}
\Delta u^2+\Delta v^2=\sum_{k=1}^{d-1}\norm{\braketd{\psi_k^{\perp}}{u\mp iv}{\psi}}^2\pm i\brakets{[u,v]},
\end{equation}
\begin{equation}
\Delta u\Delta v=
\frac{\pm\frac{1}{2}i\brakets{[u,v]}}{1-\frac{1}{2}\sum_{k=1}^{d-1}\norm{\braketd{\psi_k^{\perp}}{\frac{u}{\Delta u}\mp i\frac{v}{\Delta v}}{\psi}}^2},
\end{equation}
which had been obtained in Ref. \cite{YY15I}. Again, if we intentionally keep only one particular $\ket{\psi^\perp}$ term while discarding others, the ensuing uncertainty equalities closely mirror the uncertainty relation detailed in \cite{ML14S}
\begin{equation}
\Delta u^2+\Delta v^2\geqslant\norm{\braketd{\psi_k^{\perp}}{u\mp iv}{\psi}}^2\pm i\brakets{[u,v]},
\end{equation}
\begin{equation}
\Delta u\Delta v\geqslant
\frac{\pm\frac{1}{2}i\brakets{[u,v]}}{1-\frac{1}{2}\norm{\braketd{\psi_k^{\perp}}{\frac{u}{\Delta u}\mp i\frac{v}{\Delta v}}{\psi}}^2}.
\end{equation}

Similarly, one can also reproduce the uncertainty relations for Hermitian operators from the unitary uncertainty relations $(\ref{Eq-UURS},\ref{Eq-UURP})$ by expressing the unitary operators in terms of the Hermitian operators and keeping the terms up to second order. These instances show an intrinsic connection between the uncertainty relations of unitary operators and their counterparts in the realm of Hermitian operators.
\section{Conclusions}
\label{Sec-Con}

In this study, we put forth two unitary uncertainty equalities. These equalities are achieved by arbitrary pure state, and give rise to two hierarchical structures of unitary uncertainty relations. We also delve into the extension of uncertainty equalities and inequalities to mixed states. Notably, one of our unitary uncertainty equalities (\ref{Eq-UUES}) precisely reproduces the unitary uncertainty relation (\ref{Eq-BPUUR2}) proposed by Bagchi and Pati. To assess the effectiveness of our findings, we compared our results with the bounds established by Bagchi and Pati, Massar and Spindel, as well as Bong {\it et al}. Specifically, we examined the bounds of various unitary uncertainty relations for two unitary operators related by the discrete Fourier transform, considering a class of states $\ket{\psi}=\cos\theta\ket{0}-\sin\theta\ket{d-1}$ in the $d$-dimensional Hilbert space. Our analysis reveals that some of our unitary uncertainty relations are more stronger than both MSUUR, BPUURs, and BUUR, respectively. Moreover, it unveils a fundamental connection between unitary uncertainty equalities (relations) of unitary operators and their counterparts in the realm of Hermitian operators, particularly for higher dimensional systems and a subset of quantum states. For further investigations, since quantum information scrambling has an intricate relationship with uncertainty principle, which can be measured by an increasing commutator of two unitary operators, the unitary uncertainty equalities and inequalities may be employed to obtain nontrivial bounds for it.
\section*{Acknowledgements}
\noindent
This work was supported in part by National Natural Science Foundation of China(NSFC) under the Grants 11975236 and 12235008, and  University of Chinese Academy of Sciences.
\bibliographystyle{ref_style.bst}
\bibliography{references.bib}
\end{document}